\newtheorem{lemma}{Lemma}
\newtheorem{corollary}{Corollary}
\newtheorem{proposition}{Proposition}
\theoremstyle{definition}
\newtheorem{definition}{Definition}
\begin{document}

\begin{center}
	\Large
	\textbf{Effective Heisenberg equations for quadratic Hamiltonians}	
	
	\large 
	\textbf{A.E. Teretenkov}\footnote{Department of Mathematical Methods for Quantum Technologies, Steklov Mathematical Institute of Russian Academy of Sciences,
		ul. Gubkina 8, Moscow 119991, Russia\\ E-mail:\href{mailto:taemsu@mail.ru}{taemsu@mail.ru}}
\end{center}

\footnotesize
We discuss effective quantum dynamics obtained by averaging projector with respect to free dynamics. For unitary dynamics generated by quadratic fermionic Hamiltonians we obtain effective Heisenberg dynamics. By perturbative expansions we obtain the correspondent effective time-local Heisenberg equations. We also discuss a similar problem for bosonic case.
\normalsize


\section{Introduction}

In Ref.~\cite{Ter21} we have discussed the effective Gibbs state, which could be obtained from the exact Gibbs state by applying an averaging projector with respect to free dynamics. Here we consider a dynamical analog of such a procedure. Namely, in section \ref{sec:averProj} we define an averaging projector which  acts on dynamical maps and we regard the result as an effective dynamical map. Let us remark that the fact that such an operator acts on  dynamical maps rather than states or observables differs it both from equilibrium situation\cite{Ter21} and from the projectors which are usually discussed in Nakajima–Zwanzig  projection formalism.\cite{Nakajima58,Zwanzig60} In spite of the fact, that very different projection superoperators\cite{Breuer07, Mancal12, Trushechkin19, Semin20} are used in Nakajima–Zwanzig  projection formalism, they usually act on states or observables, so the direct analog of our projector is a ''hyperoperator'' (map from superoperators to superoperators) that acts on a dynamical map by Nakajima–Zwanzig projection superoperators on both sides of the dynamical map. 

Let us also remark that the effective dynamical map commutes with free evolution which is analogous to the dynamical maps for reduced dynamics in the weak coupling limit.\cite{Davies1974,Accardi2002} Usually time-local master equations beyond the weak coupling limit derived by time-convolutionless perturbation theory\cite{Chaturvedi79, Breuer02} with the Argyres-Kelley projection superoperator \cite{Argyres64} and they lack such a commutativity property. But our averaging projector preserves this commutativity property in all the orders of time-convolution perturbation theory. So despite the mentioned differences with Nakajima–Zwanzig  projection superoperators in principle our projection could be used in synergy with it to obtain properties like such commutativity if needed.

In section \ref{sec:effFermHeis} we apply our approach to the fermionic system with s quadratic Hamiltonian. Let us remark that similar to Refs.~\cite{Teretenkov2020Dynamics},\cite{Nosal2020} we are interested in Heisenberg dynamics (effective in our case) of products of creation and annihilation operators rather than density matrix dynamics. In section \ref{sec:timeLocalFermHeis} by time-convolutionless perturbative expansion we obtain time-local effective Heisenberg equations. In section \ref{sec:bosonicCase} we develop similar calculations for the bosonic case at the physical level of rigor, but we show that such a formal analogy leads to some contradictive results. 

In section \ref{sec:Concl} we summarize our results and discuss some open problems and directions for further development.

\section{Averaging projector for dynamical maps}
\label{sec:averProj}
	
	Let us consider a finite-dimensional Hilbert space $ \mathcal{H} $.  We denote the linear operators in $ \mathcal{B}(\mathcal{H}  )$ and superoperators, i.e. linear operators in  $ \mathcal{B}(\mathcal{H}  )$   as  $ \mathcal{B}^2(\mathcal{H}  )$  just as shortcut for $ \mathcal{B}(\mathcal{B}(\mathcal{H}  ))$. Similarly, we denote by  $ \mathcal{B}^3(\mathcal{H}  )$ the linear operators in  $ \mathcal{B}^2(\mathcal{H}  )$. We will use dot-notation for superoperators constructed explicitly from right and left multiplication by operators. E.g., if  $ X, Y \in \mathcal{B}(\mathcal{H}  ) $, then $ X \cdot Y \in  \mathcal{B}^2(\mathcal{H}) $ defined by $ (X \cdot Y) Z = X Z Y  $ for an arbitrary $ Z \in \mathcal{B}(\mathcal{H}  )  $.

Using an analogy with Ref.~\cite{Ter21} we define the averaging projector in the following way.
\begin{definition}
	Let $ \hat{H}_0 \in \mathcal{B}(\mathcal{H}) $ be self-adjoint $ \hat{H}_0^{\dagger} = \hat{H}_0 $  and  $ \Phi \in \mathcal{B}^2(\mathcal{H}) $, then let us define a map	$ \mathfrak{P} \in \mathcal{B}^3(\mathcal{H}) $ by the following formula
	\begin{equation}\label{eq:averProjDef}
		\mathfrak{P}(\Phi) \equiv \lim\limits_{T \rightarrow \infty} \frac{1}{T} \int_0^T ds e^{-i \hat{H}_0 s} \Phi( e^{i  \hat{H}_0 s} \; \cdot \;e^{-i  \hat{H}_0 s}  )  e^{i  \hat{H}_0 s}.
	\end{equation}
\end{definition}

Let us remark that $ \mathfrak{P} $ depends on the choice of $ \hat{H}_0  $, which we call free Hamiltonian, in spite of the fact, that we do not write it explicitly in the notation $ \mathfrak{P} $.  

\begin{proposition}
	Let the spectral decomposition of $ \hat{H}_0  $ have the form
	\begin{equation*}
		\hat{H}_0 = \sum_{\varepsilon} \varepsilon \Pi_{\varepsilon},
	\end{equation*}
	where $ \varepsilon $ are eigenvalues of $ \hat{H}_0 $ and $ \Pi_{\varepsilon} $ are projectors on correspondent eigenspaces $ \Pi_{\varepsilon}^{\dagger} =\Pi_{\varepsilon}$, $\Pi_{\varepsilon} \Pi_{\varepsilon'} = \delta_{\varepsilon \varepsilon'} \Pi_{\varepsilon} $, then
	\begin{equation}\label{eq:averProjRep}
		\mathfrak{P}(\Phi) = \sum_{ \varepsilon_1 - \varepsilon_2 + \varepsilon_3 - \varepsilon_4=0}  \Pi_{\varepsilon_1} \Phi( \Pi_{\varepsilon_2} \; \cdot \;  \Pi_{\varepsilon_3}   )   \Pi_{\varepsilon_4} .
	\end{equation}
\end{proposition}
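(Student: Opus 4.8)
The plan is to evaluate both sides of \eqref{eq:averProjRep} on an arbitrary operator $Z \in \mathcal{B}(\mathcal{H})$, since $\mathfrak{P}(\Phi)$ is itself a superoperator and the dot-notation on both sides merely marks the slot into which $Z$ is inserted. First I would substitute the spectral resolution $e^{\pm i \hat{H}_0 s} = \sum_{\varepsilon} e^{\pm i \varepsilon s} \Pi_{\varepsilon}$ into each of the four exponentials appearing in \eqref{eq:averProjDef}. Then the integrand of \eqref{eq:averProjDef} applied to $Z$ becomes a quadruple sum over eigenvalues $\varepsilon_1, \varepsilon_2, \varepsilon_3, \varepsilon_4$ of terms of the form $e^{-i\varepsilon_1 s} \Pi_{\varepsilon_1} \Phi\bigl(e^{i\varepsilon_2 s}\Pi_{\varepsilon_2} Z e^{-i\varepsilon_3 s}\Pi_{\varepsilon_3}\bigr) e^{i\varepsilon_4 s}\Pi_{\varepsilon_4}$.

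Next I would use the linearity of $\Phi$ to pull the scalar phases out of its argument, collecting them into a single factor $e^{-i(\varepsilon_1 - \varepsilon_2 + \varepsilon_3 - \varepsilon_4)s}$ multiplying the $s$-independent operator $\Pi_{\varepsilon_1}\Phi(\Pi_{\varepsilon_2} Z \Pi_{\varepsilon_3})\Pi_{\varepsilon_4}$. Because $\mathcal{H}$ is finite-dimensional, $\hat{H}_0$ has finitely many eigenvalues, so the sum over $(\varepsilon_1,\varepsilon_2,\varepsilon_3,\varepsilon_4)$ is finite and may be interchanged freely with the time average $\lim_{T\to\infty} T^{-1}\int_0^T ds$. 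The only genuine computation is then the time average of each phase, namely $\lim_{T\to\infty} T^{-1}\int_0^T e^{-i\omega s}\,ds$, which equals $1$ when $\omega = 0$ and vanishes otherwise. Applying this with $\omega = \varepsilon_1 - \varepsilon_2 + \varepsilon_3 - \varepsilon_4$ annihilates every term except those satisfying the resonance condition $\varepsilon_1 - \varepsilon_2 + \varepsilon_3 - \varepsilon_4 = 0$, leaving exactly the sum in \eqref{eq:averProjRep} acting on $Z$. Since $Z$ was arbitrary, the two superoperators coincide.

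I do not expect a serious obstacle: the argument is in essence a secular-averaging computation. The only points requiring care are the correct reading of the dot-notation, so that the insertion slot is tracked consistently through the application of $\Phi$, and the justification of exchanging the finite sum with the limit, which is immediate from finite-dimensionality. The elementary evaluation of the oscillatory time average is the crux that produces the constraint $\varepsilon_1 - \varepsilon_2 + \varepsilon_3 - \varepsilon_4 = 0$.
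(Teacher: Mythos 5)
Your proposal is correct and follows essentially the same route as the paper: insert the spectral resolution $e^{\pm i \hat{H}_0 s} = \sum_{\varepsilon} e^{\pm i\varepsilon s}\Pi_{\varepsilon}$, use linearity of $\Phi$ to collect the phase $e^{-i(\varepsilon_1-\varepsilon_2+\varepsilon_3-\varepsilon_4)s}$, and apply $\lim_{T\to\infty}\frac{1}{T}\int_0^T ds\, e^{-i\varepsilon s} = \delta_{\varepsilon 0}$ to enforce the resonance condition. Your extra remarks on tracking the insertion slot and on interchanging the finite sum with the time average are sound justifications of steps the paper performs implicitly.
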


\begin{proof}
	By Eq. \eqref{eq:averProjDef}
	\begin{equation*}
		\mathfrak{P}(\Phi) = \sum_{\varepsilon_1, \varepsilon_2, \varepsilon_3, \varepsilon_4}\lim\limits_{T \rightarrow \infty} \frac{1}{T} \int_0^T ds e^{-i(\varepsilon_1 - \varepsilon_2 + \varepsilon_3 - \varepsilon_4) s} \Pi_{\varepsilon_1} \Phi( \Pi_{\varepsilon_2} \; \cdot \;  \Pi_{\varepsilon_3})    \Pi_{\varepsilon_4}
	\end{equation*}
	Taking into account
	\begin{equation*}
		\lim\limits_{T \rightarrow \infty} \frac{1}{T} \int_0^T ds e^{-i \varepsilon s} = \delta_{\varepsilon 0} 
	\end{equation*}
	we obtain Eq. \eqref{eq:averProjRep}.
\end{proof}

\begin{proposition}\label{prop:propOfP}
	\begin{enumerate}
		\item The map $ \mathfrak{P} $ is an idempotent
		\begin{equation*}
			\mathfrak{P}^2 = \mathfrak{P}
		\end{equation*}
		\item The superoperator obtained by application of $ \mathfrak{P} $ commutes with superoperator of free evolution
		\begin{equation*}
			[\mathfrak{P}(\Phi), e^{i  \hat{H}_0 t} \; \cdot \; e^{-i  \hat{H}_0 t}] = 0, \qquad \forall \Phi \in \mathcal{B}^2(\mathcal{H}).
		\end{equation*}
		\item Pulling the free dynamics through the map $ \mathfrak{P} $ 
		\begin{equation*}
			\mathfrak{P}( e^{i  \hat{H}_0 t} \Phi( \; \cdot \; ) e^{-i  \hat{H}_0 t}) = e^{i  \hat{H}_0 t}\mathfrak{P}(  \Phi(\; \cdot \; ) ) e^{-i  \hat{H}_0 t},  \quad \forall  \Phi \in \mathcal{B}^2(\mathcal{H}).
		\end{equation*}
	\end{enumerate}
\end{proposition}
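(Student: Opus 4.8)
The plan is to base all three parts on the spectral representation \eqref{eq:averProjRep} established in the previous proposition, together with two elementary facts: the spectral projectors are orthogonal, $\Pi_{\varepsilon}\Pi_{\varepsilon'} = \delta_{\varepsilon\varepsilon'}\Pi_{\varepsilon}$, and $e^{\pm i\hat{H}_0 t} = \sum_{\varepsilon} e^{\pm i\varepsilon t}\Pi_{\varepsilon}$ is diagonal in the same eigenbasis, so that $\Pi_{\varepsilon}e^{\pm i\hat{H}_0 t} = e^{\pm i\varepsilon t}\Pi_{\varepsilon}$. Each claim then reduces to a short index manipulation, and I do not expect any analytic subtlety beyond what is already contained in \eqref{eq:averProjRep}; the only genuinely instructive point is the phase-matching in part 2.

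For part 1 (idempotency) I would write $\mathfrak{P}(\mathfrak{P}(\Phi))$ by substituting the representation of the inner $\mathfrak{P}(\Phi)$ into that of the outer $\mathfrak{P}$. This produces a double sum over $(\varepsilon_1,\dots,\varepsilon_4)$ and $(\mu_1,\dots,\mu_4)$ in which each $\Pi_{\mu_i}$ is adjacent to the corresponding $\Pi_{\varepsilon_i}$. Collapsing these four products of projectors by orthogonality forces $\mu_i = \varepsilon_i$ for all $i$, whereupon the outer constraint $\mu_1 - \mu_2 + \mu_3 - \mu_4 = 0$ becomes identical to the inner one, and the double sum reduces to a single copy of \eqref{eq:averProjRep}.

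For part 2 (commutativity with free evolution), writing $\mathcal{U}_t = e^{i\hat{H}_0 t}\,\cdot\,e^{-i\hat{H}_0 t}$, I would compute both compositions $\mathfrak{P}(\Phi)\circ \mathcal{U}_t$ and $\mathcal{U}_t\circ \mathfrak{P}(\Phi)$ on an arbitrary argument. Conjugating the argument on the inside pulls a phase $e^{i(\varepsilon_2 - \varepsilon_3)t}$ out through the inner projectors $\Pi_{\varepsilon_2}(\,\cdot\,)\Pi_{\varepsilon_3}$, whereas conjugating on the outside pulls a phase $e^{i(\varepsilon_1 - \varepsilon_4)t}$ out through the outer projectors $\Pi_{\varepsilon_1}(\,\cdot\,)\Pi_{\varepsilon_4}$. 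The heart of the argument, and the one place where the precise form of the constraint matters, is that on the summation set $\varepsilon_1 - \varepsilon_2 + \varepsilon_3 - \varepsilon_4 = 0$ one has $\varepsilon_1 - \varepsilon_4 = \varepsilon_2 - \varepsilon_3$, so the two phases coincide term by term and the two compositions agree.

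Part 3 (pulling the free dynamics through) I would treat last and most directly: inserting the superoperator $X\mapsto e^{i\hat{H}_0 t}\Phi(X)e^{-i\hat{H}_0 t}$ into \eqref{eq:averProjRep} places $e^{i\hat{H}_0 t}$ immediately to the right of $\Pi_{\varepsilon_1}$ and $e^{-i\hat{H}_0 t}$ immediately to the left of $\Pi_{\varepsilon_4}$. Since these exponentials commute with every projector, they can be moved to the extreme left and right of each summand and factored out of the whole sum, giving precisely $e^{i\hat{H}_0 t}\mathfrak{P}(\Phi)(\,\cdot\,)e^{-i\hat{H}_0 t}$; here the constraint plays no role at all. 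Throughout, the only thing to watch is the bookkeeping of which exponential or inner projector meets which spectral projector, but no real obstacle arises once the spectral representation is in hand.
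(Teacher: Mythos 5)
Your proposal is correct and follows essentially the same route as the paper: parts 1 and 2 use the spectral representation \eqref{eq:averProjRep} with projector orthogonality and the phase-matching identity $\varepsilon_2-\varepsilon_3=\varepsilon_1-\varepsilon_4$ on the constraint set, exactly as in the paper's proof. The only (inessential) difference is in part 3, where you factor $e^{\pm i\hat{H}_0 t}$ out of the spectral sum \eqref{eq:averProjRep} using that the exponentials commute with every $\Pi_{\varepsilon}$, while the paper works directly from the averaging definition \eqref{eq:averProjDef} and commutes $e^{i\hat{H}_0 t}$ past $e^{-i\hat{H}_0 s}$; both are one-line arguments resting on the same commutation fact.
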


\begin{proof}
	\begin{enumerate}
		\item Expanding $ \mathfrak{P} $ by Eq. \eqref{eq:averProjRep} we have
	\begin{eqnarray*}
		&\mathfrak{P}^2(\Phi) = \sum_{ \varepsilon_1 - \varepsilon_2 + \varepsilon_3 - \varepsilon_4=0} \mathfrak{P} (\Pi_{\varepsilon_1} \Phi( \Pi_{\varepsilon_2} \; \cdot \;  \Pi_{\varepsilon_3}   )   \Pi_{\varepsilon_3} ) \\
		&= \sum_{ \varepsilon_1' - \varepsilon_2' + \varepsilon_3' - \varepsilon_4'=0}\sum_{ \varepsilon_1 - \varepsilon_2 + \varepsilon_3 - \varepsilon_4=0} \Pi_{\varepsilon_1'} \Pi_{\varepsilon_1} \Phi( \Pi_{\varepsilon_2} \Pi_{\varepsilon_2'} \; \cdot \;  \Pi_{\varepsilon_3'}  \Pi_{\varepsilon_3}   )   \Pi_{\varepsilon_4} \Pi_{\varepsilon_4'} \\
		&= \sum_{ \varepsilon_1' - \varepsilon_2' + \varepsilon_3' - \varepsilon_4'=0}\sum_{ \varepsilon_1 - \varepsilon_2 + \varepsilon_3 - \varepsilon_4=0} \delta_{\varepsilon_1 \varepsilon_1'} \delta_{\varepsilon_2 \varepsilon_2'} \delta_{\varepsilon_3 \varepsilon_3'} \delta_{\varepsilon_4 \varepsilon_4'}\Pi_{\varepsilon_1} \Phi( \Pi_{\varepsilon_2} \; \cdot \;  \Pi_{\varepsilon_3}   )   \Pi_{\varepsilon_4} \\
		&= \sum_{ \varepsilon_1 - \varepsilon_2 + \varepsilon_3 - \varepsilon_4=0}  \Pi_{\varepsilon_1} \Phi( \Pi_{\varepsilon_2} \; \cdot \;  \Pi_{\varepsilon_3}   )   \Pi_{\varepsilon_4}= \mathfrak{P}(\Phi)
	\end{eqnarray*}

	\item  Taking into account Eq. \eqref{eq:averProjRep} once again, we obtain
	\begin{eqnarray*}
		&\mathfrak{P}(\Phi(  e^{i  \hat{H}_0 t} \; \cdot \;e^{-i  \hat{H}_0 t})) = \sum_{ \varepsilon_1 - \varepsilon_2 + \varepsilon_3 - \varepsilon_4=0}  \Pi_{\varepsilon_1} \Phi( \Pi_{\varepsilon_2} e^{i  \hat{H}_0 t} \; \cdot \;e^{-i  \hat{H}_0 t}  \Pi_{\varepsilon_3}   )   \Pi_{\varepsilon_4} \\
		&=  \sum_{ \varepsilon_1 - \varepsilon_2 + \varepsilon_3 - \varepsilon_4=0} e^{i (\varepsilon_2 - \varepsilon_3)t}\Pi_{\varepsilon_1} \Phi( \Pi_{\varepsilon_2}\; \cdot \  \Pi_{\varepsilon_3}   )   \Pi_{\varepsilon_4} \\
		&=  \sum_{ \varepsilon_1 - \varepsilon_2 + \varepsilon_3 - \varepsilon_4=0} e^{i (\varepsilon_1 - \varepsilon_4)t}\Pi_{\varepsilon_1} \Phi( \Pi_{\varepsilon_2}\; \cdot \  \Pi_{\varepsilon_3}   )   \Pi_{\varepsilon_4} \\
		&=  \sum_{ \varepsilon_1 - \varepsilon_2 + \varepsilon_3 - \varepsilon_4=0} \Pi_{\varepsilon_1}  e^{i  \hat{H}_0 t}\Phi( \Pi_{\varepsilon_2}\; \cdot \  \Pi_{\varepsilon_3}   )   \Pi_{\varepsilon_4}   e^{-i  \hat{H}_0 t} \\
		&=e^{i  \hat{H}_0 t} \mathfrak{P}(\Phi) e^{-i  \hat{H}_0 t}.
	\end{eqnarray*}

	\item By Eq. \eqref{eq:averProjDef}
	\begin{eqnarray*}
		&	\mathfrak{P}( e^{i  \hat{H}_0 t} \Phi( \; \cdot \; ) e^{-i  \hat{H}_0 t}) = \lim\limits_{T \rightarrow \infty} \frac{1}{T} \int_0^T ds e^{-i \hat{H}_0 s} e^{i  \hat{H}_0 t} \Phi(\; \cdot \; ) e^{-i  \hat{H}_0 t}   e^{i  \hat{H}_0 s} \\
		& =  e^{i  \hat{H}_0 t} \left( \lim\limits_{T \rightarrow \infty} \frac{1}{T} \int_0^T ds e^{-i \hat{H}_0 s} \Phi(\; \cdot \; )    e^{i  \hat{H}_0 s} \right)e^{-i  \hat{H}_0 t}= e^{i  \hat{H}_0 t}\mathfrak{P}(  \Phi(\; \cdot \; ) ) e^{-i  \hat{H}_0 t}
	\end{eqnarray*}
\end{enumerate}
\end{proof}

In the Nakajima–Zwanzig projector formalism  idempotents are usually called projectors in spite of the fact that they are not necessarily self-adjoint with respect to some scalar product.

Now let us take unitary dynamical map $ \Phi_t =  e^{i \hat{H} t} \; \cdot \; e^{-i \hat{H} t} $ as $ \Phi $.

\begin{proposition}
	For unitary dynamical map $ \Phi_t =  e^{i \hat{H} t} \; \cdot \; e^{-i \hat{H} t} $
	\begin{equation*}
		\mathfrak{P}(e^{i \hat{H} t} \; \cdot \; e^{-i \hat{H} t}) =\lim\limits_{T \rightarrow \infty} \frac{1}{T} \int_0^T ds  e^{i \hat{H}(s) t} \; \cdot \;  e^{-i \hat{H}(s) t} ,
	\end{equation*}
	where $ \hat{H}(s) = e^{-i \hat{H}_0 s} \hat{H}  e^{i \hat{H}_0 s}   $.
\end{proposition}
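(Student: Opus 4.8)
The plan is to substitute the unitary dynamical map $\Phi_t = e^{i \hat{H} t} \; \cdot \; e^{-i \hat{H} t}$ directly into the defining formula \eqref{eq:averProjDef} rather than into the spectral representation \eqref{eq:averProjRep}, since the right-hand side of the claimed identity is already written in the averaged-integral form. Evaluating $\Phi_t$ on the argument $e^{i \hat{H}_0 s} \; \cdot \; e^{-i \hat{H}_0 s}$ produces the composite superoperator sending $Z$ to $e^{i \hat{H} t} e^{i \hat{H}_0 s} Z e^{-i \hat{H}_0 s} e^{-i \hat{H} t}$, and conjugating by the outer factors $e^{-i \hat{H}_0 s} (\,\cdots\,) e^{i \hat{H}_0 s}$ prescribed by the definition yields the integrand
\[
	e^{-i \hat{H}_0 s} e^{i \hat{H} t} e^{i \hat{H}_0 s} \; \cdot \; e^{-i \hat{H}_0 s} e^{-i \hat{H} t} e^{i \hat{H}_0 s}.
\]

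The key step is then to recognize each operator factor as a single exponential. Since $\hat{H}_0$ is self-adjoint, $e^{-i \hat{H}_0 s}$ is unitary, and the standard conjugation identity $U e^{A} U^{-1} = e^{U A U^{-1}}$, applied with $U = e^{-i \hat{H}_0 s}$ and $A = i \hat{H} t$, gives
\[
	e^{-i \hat{H}_0 s} e^{i \hat{H} t} e^{i \hat{H}_0 s} = \exp\!\left( i t \, e^{-i \hat{H}_0 s} \hat{H} e^{i \hat{H}_0 s} \right) = e^{i \hat{H}(s) t}.
\]
Replacing $t$ by $-t$ (equivalently, taking the adjoint) gives the matching identity $e^{-i \hat{H}_0 s} e^{-i \hat{H} t} e^{i \hat{H}_0 s} = e^{-i \hat{H}(s) t}$ for the right factor.

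Substituting these two identities collapses the integrand to $e^{i \hat{H}(s) t} \; \cdot \; e^{-i \hat{H}(s) t}$, and carrying the operation $\lim_{T \rightarrow \infty} \frac{1}{T} \int_0^T ds$ through the expression produces precisely the claimed right-hand side. I do not anticipate any genuine obstacle here, as the argument is a direct computation; the only point that warrants a moment of care is the direction of the conjugation in the identity $U e^{A} U^{-1} = e^{U A U^{-1}}$, namely checking that the definition $\hat{H}(s) = e^{-i \hat{H}_0 s} \hat{H} e^{i \hat{H}_0 s}$ is indeed the one paired with the factor ordering $e^{-i \hat{H}_0 s} (\,\cdots\,) e^{i \hat{H}_0 s}$ coming from \eqref{eq:averProjDef}, rather than its inverse. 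A quick comparison with the definition confirms the correct matching.
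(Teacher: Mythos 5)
Your proposal is correct and follows essentially the same route as the paper: substitute the unitary map directly into the defining formula \eqref{eq:averProjDef} and collapse each conjugated exponential via $U e^{A} U^{-1} = e^{U A U^{-1}}$. The paper performs this conjugation step silently inside a single display, whereas you spell it out explicitly; the content is identical.
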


\begin{proof}
	By Eq. \eqref{eq:averProjDef}
	\begin{align*}
		\mathfrak{P}(e^{i \hat{H} t} \; \cdot \; e^{-i \hat{H} t}) =\lim\limits_{T \rightarrow \infty} \frac{1}{T} \int_0^T ds e^{-i \hat{H}_0 s} e^{i \hat{H} t} e^{i \hat{H}_0 s} \; \cdot \;e^{-i \hat{H}_0 s}  e^{-i \hat{H} t}  e^{i \hat{H}_0 s} =\\ =\lim\limits_{T \rightarrow \infty} \frac{1}{T} \int_0^T ds  e^{i \hat{H}(s) t} \; \cdot \;  e^{-i \hat{H}(s) t} .
	\end{align*}
\end{proof}

Thus, for unitary dynamics the effective Heisenberg dynamics is reduced to computation of free Heisenberg dynamics of Hamiltonian in the averaging parameter $ s $ and then Heisenberg dynamics with such a Hamiltonian in time $ t $. For products of operators it could be done for each operator separately.
\begin{corollary}\label{cor:averSevOp}
	Let $ X_1, \ldots, X_m  \in \mathcal{B}(\mathcal{H})$, then
	\begin{equation*}
		\mathfrak{P}(e^{i \hat{H} t} \; \cdot \; e^{-i \hat{H} t})(X_1 \ldots X_m) =\lim\limits_{T \rightarrow \infty} \frac{1}{T} \int_0^T ds  X_1(s;t) \ldots X_m(s;t),  
	\end{equation*}
	where $ X_k(s;t) \equiv e^{i \hat{H}(s) t} X_k e^{-i \hat{H}(s) t} $ for $ k= 1, \ldots,m $.
\end{corollary}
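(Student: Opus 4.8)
The plan is to apply the immediately preceding Proposition directly to the product operator and then exploit the multiplicativity of conjugation by a unitary. First I would take $ Z = X_1 \ldots X_m \in \mathcal{B}(\mathcal{H}) $ as the argument of the superoperator $ \mathfrak{P}(e^{i \hat{H} t} \; \cdot \; e^{-i \hat{H} t}) $. By the previous Proposition this yields
\begin{equation*}
	\mathfrak{P}(e^{i \hat{H} t} \; \cdot \; e^{-i \hat{H} t})(X_1 \ldots X_m) = \lim\limits_{T \rightarrow \infty} \frac{1}{T} \int_0^T ds\, e^{i \hat{H}(s) t} X_1 \ldots X_m\, e^{-i \hat{H}(s) t},
\end{equation*}
with $ \hat{H}(s) = e^{-i \hat{H}_0 s} \hat{H} e^{i \hat{H}_0 s} $, so the whole task reduces to rewriting the integrand factor-by-factor.

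The key step is the observation that $ \hat{H}(s) $ is self-adjoint for every real $ s $, being the conjugate of the self-adjoint $ \hat{H} $ by the unitary $ e^{i \hat{H}_0 s} $; hence $ e^{i \hat{H}(s) t} $ is unitary with inverse $ e^{-i \hat{H}(s) t} $. Conjugation by a unitary is an algebra homomorphism, which I would make explicit by inserting the identity $ e^{-i \hat{H}(s) t} e^{i \hat{H}(s) t} = I $ between each pair of neighbouring factors:
\begin{equation*}
	e^{i \hat{H}(s) t} X_1 \ldots X_m\, e^{-i \hat{H}(s) t} = \prod_{k=1}^{m} \left( e^{i \hat{H}(s) t} X_k e^{-i \hat{H}(s) t} \right) = X_1(s;t) \ldots X_m(s;t),
\end{equation*}
where the last equality is nothing but the definition $ X_k(s;t) \equiv e^{i \hat{H}(s) t} X_k e^{-i \hat{H}(s) t} $.

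Finally I would substitute this factorized integrand back under the Cesàro average. Since the identity above holds for each fixed $ s $, it passes verbatim through the integral and the limit $ T \rightarrow \infty $, giving the claimed formula. I do not expect any genuine obstacle here: the entire content is the homomorphism property of unitary conjugation applied pointwise in $ s $, and the limit requires no justification beyond that already supplied for the preceding Proposition, since one is merely rewriting the same integrand.
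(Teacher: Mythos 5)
Your proposal is correct and matches the paper's reasoning: the paper treats this corollary as an immediate consequence of the preceding Proposition, with the factorization of the conjugated product (inserting $e^{-i\hat{H}(s)t}e^{i\hat{H}(s)t}=I$ between factors) being exactly the implicit step. No substantive difference in approach.
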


\section{Effective fermionic Heisenberg dynamics of moments}
\label{sec:effFermHeis}

Let us take $ \mathcal{H} = \mathbb{C}^{2^n}$. In such a space one could  (see Ref.~\cite{Takht11}, p. 407 for explicit formulae) define $ n $ pairs of fermionic creation and annihilation operators satisfying canonical anticommutation relations: $ \{\hat{c}_i^{\dagger}, \hat{c}_j \} = \delta_{ij},  \{\hat{c}_i, \hat{c}_j\} = 0 $.  We use the notation which is similar to Ref.~\cite{Ter17, Ter19, Ter19R}.  Let us define the $2n$-dimensional vector $\mathfrak{c} = (\hat{c}_1, \ldots, \hat{c}_n, \hat{c}_1^{\dagger}, \ldots, \hat{c}_n^{\dagger})^T$ of creation and annihilaiton operators. The  quadratic forms in such operators we denote by  $ \mathfrak{c}^T K \mathfrak{c} $, $ K \in \mathbb{C}^{2n \times 2n} $.  Define the $2n \times 2n$-dimensional matrix
\begin{equation*}
	E = \biggl(
	\begin{array}{cc}
		0 & I_n \\ 
		I_n & 0
	\end{array} 
	\biggr),
\end{equation*}
where $ I_n $ is the identity matrix from $ \mathbb{C}^{n \times n} $. Then canonical anticommutation relations take the form $ \{f^T\mathfrak{c}, \mathfrak{c}^Tg \} = f^TEg$, $ f, g \in \mathbb{C}^{2n}$. We also define the $\sim$-conju\-ga\-tion of matrices by the formula
\begin{equation*}
	\tilde{K} = E \overline{K} E, \qquad K \in \mathbb{C}^{2n \times 2n},
\end{equation*}
where the overline is an (elementwise) complex conjugation.

Now let us assume
\begin{equation*}
	 \hat{H} = \frac12\mathfrak{c}^T H \mathfrak{c}, \qquad \hat{H}_0 = \frac12\mathfrak{c}^T H_0 \mathfrak{c}
\end{equation*}
for $ \hat{H}  $ and $ \hat{H}_0 $ defined in the previous section, where $ H, H_0 \in \mathbb{C}^{2n \times 2 n}$ such that $  H = -H^T = -\tilde{H}  $, $  H_0 = -H_0^T = -\tilde{H}_0  $ (this conditions provide the self-adjointness $ \hat{H} =  \hat{H}^{\dagger} $, $ \hat{H}_0 =  \hat{H}_0^{\dagger} $ of such quadratic Hamiltonians without any other restrictions).

Let us formulate the special case of Lemma 1 from Ref.~\cite{Ter17} in the form which is useful for our case.
\begin{lemma}
	\label{lem:Hs}
	\begin{equation*}
		\hat{H}(s) \equiv e^{-i \hat{H}_0 s} \hat{H}  e^{i \hat{H}_0 s} =\frac12 \mathfrak{c}^T  H(s)\mathfrak{c},
	\end{equation*}
	where $ H(s) \equiv e^{-i H_0 E  s} H e^{i E H_0 s}  $.
\end{lemma}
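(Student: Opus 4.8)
The plan is to reduce the conjugation of the whole quadratic form to the conjugation of the individual generators $\mathfrak{c}_a$, exploiting that $\hat{H}_0$ is quadratic and hence generates a linear (Bogoliubov) flow on them. Concretely, I would set $\mathfrak{c}(s) \equiv e^{-i\hat{H}_0 s}\,\mathfrak{c}\, e^{i\hat{H}_0 s}$ (understood componentwise). Since conjugation by a unitary is an algebra homomorphism,
\[
	\hat{H}(s) = e^{-i\hat{H}_0 s}\Bigl(\tfrac12\,\mathfrak{c}^T H \mathfrak{c}\Bigr)e^{i\hat{H}_0 s} = \tfrac12\,\mathfrak{c}(s)^T H\,\mathfrak{c}(s),
\]
so the whole statement reduces to identifying $\mathfrak{c}(s)$ as a linear image $e^{iEH_0 s}\mathfrak{c}$ of the generators and then transposing the left-hand factor.

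First I would compute the generator of this flow, i.e.\ the commutator $[\hat{H}_0,\mathfrak{c}_a]$. Writing $\hat{H}_0 = \tfrac12\sum_{bc}(H_0)_{bc}\,\mathfrak{c}_b\mathfrak{c}_c$ and using the canonical anticommutation relations in the form $\{\mathfrak{c}_a,\mathfrak{c}_b\}=E_{ab}$, one gets the elementary identity $[\mathfrak{c}_b\mathfrak{c}_c,\mathfrak{c}_a]=E_{ca}\mathfrak{c}_b - E_{ba}\mathfrak{c}_c$, which produces two sums. The crucial algebraic step is to collapse them into one: using $E=E^T$ and the antisymmetry $H_0=-H_0^T$ (so that $(H_0E)^T=-EH_0$), both sums turn out to equal $-\tfrac12(EH_0\mathfrak{c})_a$, giving the clean result $[\hat{H}_0,\mathfrak{c}]=-EH_0\,\mathfrak{c}$. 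This is the main obstacle: all of the sign and index bookkeeping, the consumption of the factor $\tfrac12$, and the correct invocation of $H_0=-H_0^T$ happen here, and a misplaced transpose would spoil the final matrix ordering.

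With the commutator in hand, differentiating $\mathfrak{c}(s)$ gives a linear Heisenberg ODE,
\[
	\frac{d}{ds}\mathfrak{c}(s) = -i\,e^{-i\hat{H}_0 s}[\hat{H}_0,\mathfrak{c}]\,e^{i\hat{H}_0 s} = iEH_0\,\mathfrak{c}(s), \qquad \mathfrak{c}(0)=\mathfrak{c},
\]
where the $c$-number matrix $EH_0$ pulls freely through the conjugation. Its unique solution is $\mathfrak{c}(s)=e^{iEH_0 s}\mathfrak{c}$.

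Finally I would substitute this back into $\hat{H}(s)=\tfrac12\,\mathfrak{c}(s)^T H\,\mathfrak{c}(s)$. Since $\mathfrak{c}(s)^T = \mathfrak{c}^T e^{i(EH_0)^T s} = \mathfrak{c}^T e^{iH_0^T E s} = \mathfrak{c}^T e^{-iH_0 E s}$ (again using $E=E^T$ and $H_0=-H_0^T$), this yields
\[
	\hat{H}(s) = \tfrac12\,\mathfrak{c}^T e^{-iH_0 E s}\,H\,e^{iEH_0 s}\,\mathfrak{c} = \tfrac12\,\mathfrak{c}^T H(s)\,\mathfrak{c}, \qquad H(s)=e^{-iH_0 E s}He^{iEH_0 s},
\]
which is the claim. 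As a consistency check, one can verify that $H(s)$ inherits the defining symmetries $H(s)=-H(s)^T=-\widetilde{H(s)}$, so that $\hat{H}(s)$ is again a legitimate self-adjoint quadratic Hamiltonian — as it must be, being a unitary conjugate of $\hat{H}$.
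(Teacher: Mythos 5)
Your proof is correct, and every delicate step checks out: the identity $[\mathfrak{c}_b\mathfrak{c}_c,\mathfrak{c}_a]=E_{ca}\mathfrak{c}_b-E_{ba}\mathfrak{c}_c$, the collapse of the two sums via $E=E^T$ and $H_0=-H_0^T$ into $[\hat{H}_0,\mathfrak{c}]=-EH_0\mathfrak{c}$, and the final transposition $(EH_0)^T=H_0^TE=-H_0E$, which produces precisely the asymmetric placement $e^{-iH_0Es}\,H\,e^{iEH_0s}$ demanded by the statement. One point of comparison is worth making explicit: the paper does not prove this lemma at all --- it is imported verbatim as ``a special case of Lemma 1 from Ref.~\cite{Ter17}'', so there is no in-paper argument to match yours against. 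Within the paper's own toolkit, the shortest route would have been to invoke Lemma~\ref{lem:orthTransform} (quoted from Ref.~\cite{Nosal2020}) with $H\to -sH_0$, which immediately gives $e^{-i\hat{H}_0 s}\mathfrak{c}\,e^{i\hat{H}_0 s}=e^{iEH_0s}\mathfrak{c}$; your homomorphism-plus-transposition step then finishes the proof in two lines. Your commutator-and-ODE computation is exactly the standard derivation of that quoted lemma, so in effect you have produced a self-contained proof that re-establishes the imported ingredient rather than citing it --- a perfectly valid, and arguably more transparent, alternative. Your closing consistency check that $H(s)=-H(s)^T=-\widetilde{H(s)}$ is a nice touch, though as stated it is an unproved remark; if you keep it, note that it follows from $(H_0E)^T=-EH_0$ and $E\overline{H_0}=-H_0E$ by the same bookkeeping you already carried out.
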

And let us also give here Lemma 1 from Ref.~\cite{Nosal2020}. 
\begin{lemma}
	\label{lem:orthTransform} 
	Let $ H = -H^T \in \mathbb{C}^{2 n \times 2n} $, then  $ e^{ \frac{i}{2} \mathfrak{c}^T H \mathfrak{c} }  \mathfrak{c} e^{- \frac{i}{2} \mathfrak{c}^T H \mathfrak{c} } = O \mathfrak{c}$, where $O \equiv e^{-i E H} $.
\end{lemma}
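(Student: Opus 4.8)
The plan is to prove Lemma~\ref{lem:orthTransform}, which states that conjugating the vector $\mathfrak{c}$ by the unitary $e^{\frac{i}{2}\mathfrak{c}^T H \mathfrak{c}}$ produces a linear (orthogonal in the symplectic/Bogoliubov sense) transformation $O\mathfrak{c}$ with $O = e^{-iEH}$. Let me sketch my approach before seeing the author's proof.

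The natural approach is to define the $s$-dependent vector $\mathfrak{c}(s) \equiv e^{\frac{i}{2}\mathfrak{c}^T H \mathfrak{c}\, s}\,\mathfrak{c}\, e^{-\frac{i}{2}\mathfrak{c}^T H \mathfrak{c}\, s}$ and differentiate with respect to $s$, turning the identity into a linear ODE. First I would compute $\frac{d}{ds}\mathfrak{c}(s) = e^{\frac{i}{2}\mathfrak{c}^T H \mathfrak{c}\, s}\, \frac{i}{2}[\mathfrak{c}^T H \mathfrak{c}, \mathfrak{c}]\, e^{-\frac{i}{2}\mathfrak{c}^T H \mathfrak{c}\, s}$. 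The key input is evaluating the commutator $[\mathfrak{c}^T H \mathfrak{c}, \mathfrak{c}]$ using the canonical anticommutation relations, which in the compact form read $\{f^T\mathfrak{c}, \mathfrak{c}^T g\} = f^T E g$. This step is where the structure $H = -H^T$ gets used: writing out the quadratic form component-wise and commuting a single $\hat{c}$-operator through a product of two operators via the anticommutator identity $[AB,C] = A\{B,C\} - \{A,C\}B$.

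The main obstacle, and the computation I expect to be the most delicate, is the commutator $[\mathfrak{c}^T H \mathfrak{c}, \mathfrak{c}]$. The antisymmetry $H = -H^T$ is essential: it ensures that the two terms arising from anticommuting $\mathfrak{c}$ past each of the two factors in the quadratic form combine additively rather than cancel, producing a clean linear result of the form $-2iEH\mathfrak{c}$ (up to the precise constant, which I would pin down by careful bookkeeping of the factor $\tfrac12$ and the matrix $E$). I would verify the constant by checking one explicit component, since sign and factor-of-two errors are the easiest mistakes here. Once this commutator is known to equal a constant matrix times $\mathfrak{c}$, say $[\tfrac{i}{2}\mathfrak{c}^T H \mathfrak{c}, \mathfrak{c}] = -iEH\,\mathfrak{c}$, the differentiation gives $\frac{d}{ds}\mathfrak{c}(s) = -iEH\,\mathfrak{c}(s)$, because the conjugation by the unitary passes through the constant matrix and acts again on the bare $\mathfrak{c}$ inside.

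With the ODE $\frac{d}{ds}\mathfrak{c}(s) = -iEH\,\mathfrak{c}(s)$ and initial condition $\mathfrak{c}(0) = \mathfrak{c}$ in hand, the solution is immediate: $\mathfrak{c}(s) = e^{-iEH s}\,\mathfrak{c}$. Setting $s = 1$ yields $\mathfrak{c}(1) = e^{-iEH}\mathfrak{c} = O\mathfrak{c}$, which is exactly the claim. Since the paper cites this as Lemma~1 from Ref.~\cite{Nosal2020}, I would expect the author to either reproduce this short ODE argument or simply quote the reference; the entire content reduces to the single commutator computation, so that is where all the real work lies.
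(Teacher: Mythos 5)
Your proposal is correct, but note that the paper itself contains no proof of Lemma~\ref{lem:orthTransform}: the author simply imports it as Lemma~1 of Ref.~\cite{Nosal2020}, so there is no in-paper argument to compare against. Your ODE argument is the standard way to establish it, and the crucial commutator comes out exactly as you claim: writing $\{\mathfrak{c}_i,\mathfrak{c}_j\}=E_{ij}$ and using $[AB,C]=A\{B,C\}-\{A,C\}B$ together with $E^T=E$ and $H^T=-H$,
\begin{equation*}
[\mathfrak{c}^T H \mathfrak{c}, \mathfrak{c}_k] \;=\; \sum_{i,j} H_{ij}\bigl( E_{jk}\,\mathfrak{c}_i - E_{ik}\,\mathfrak{c}_j \bigr) \;=\; -2\,(EH\mathfrak{c})_k,
\end{equation*}
so that $[\tfrac{i}{2}\mathfrak{c}^T H\mathfrak{c},\mathfrak{c}] = -iEH\,\mathfrak{c}$, the ODE $\tfrac{d}{ds}\mathfrak{c}(s)=-iEH\,\mathfrak{c}(s)$ follows (the constant matrix $-iEH$ passes through the unitary conjugation), and $s=1$ gives $O=e^{-iEH}$. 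Your structural remark is also accurate: the antisymmetry of $H$ is what makes the two terms add rather than cancel; for symmetric $K$ the quadratic form $\mathfrak{c}^T K\mathfrak{c}$ is a multiple of the identity and the commutator vanishes. In short, your proof correctly supplies the argument that the paper delegates to its reference.
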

They lead to the following lemma.
\begin{lemma}\label{lem:calcUnitTrans}
	\begin{equation*}
		 e^{i \hat{H}(s) t} \mathfrak{c} \otimes \ldots \otimes \mathfrak{c}   e^{-i \hat{H}(s) t} =( e^{-i E H(s) t} \otimes   \ldots \otimes  e^{-i E H(s) t} ) \mathfrak{c} \otimes \ldots \otimes \mathfrak{c}
	\end{equation*}
\end{lemma}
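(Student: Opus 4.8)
The plan is to reduce the multi-factor statement to the single-factor transformation law supplied by Lemmas \ref{lem:Hs} and \ref{lem:orthTransform}, and then to lift that law to the tensor product by exploiting that conjugation by a fixed unitary is an algebra homomorphism.

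First I would check that $\hat{H}(s)$ is again a quadratic Hamiltonian to which Lemma \ref{lem:orthTransform} applies. By Lemma \ref{lem:Hs} we have $\hat{H}(s) = \frac12 \mathfrak{c}^T H(s) \mathfrak{c}$ with $H(s) = e^{-i H_0 E s} H e^{i E H_0 s}$, and since $E^T = E$, $H^T = -H$, $H_0^T = -H_0$, a short computation gives $H(s)^T = -H(s)$. Hence $t H(s) = -(t H(s))^T$ for every scalar $t$, so the hypothesis of Lemma \ref{lem:orthTransform} is met with $H$ replaced by $t H(s)$. Observing that $\frac{i}{2}\mathfrak{c}^T (t H(s))\mathfrak{c} = i t \hat{H}(s)$, Lemma \ref{lem:orthTransform} then yields the single-factor law
\begin{equation*}
	e^{i \hat{H}(s) t} \mathfrak{c} \, e^{-i \hat{H}(s) t} = e^{-i E H(s) t}\mathfrak{c},
\end{equation*}
that is, componentwise, $e^{i \hat{H}(s) t} \mathfrak{c}_a \, e^{-i \hat{H}(s) t} = \sum_b (e^{-i E H(s) t})_{ab}\, \mathfrak{c}_b$.

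Finally I would lift this to the $m$-fold tensor product. The component of $\mathfrak{c} \otimes \cdots \otimes \mathfrak{c}$ indexed by $(a_1,\ldots,a_m)$ is the operator product $\mathfrak{c}_{a_1}\cdots \mathfrak{c}_{a_m}$. Because conjugation by the unitary $U = e^{i \hat{H}(s) t}$ is multiplicative, one has $U(\mathfrak{c}_{a_1}\cdots\mathfrak{c}_{a_m})U^{\dagger} = (U\mathfrak{c}_{a_1}U^{\dagger})\cdots(U\mathfrak{c}_{a_m}U^{\dagger})$; substituting the single-factor law into each factor and expanding gives
\begin{equation*}
	U \mathfrak{c}_{a_1}\cdots\mathfrak{c}_{a_m} U^{\dagger} = \sum_{b_1,\ldots,b_m} (e^{-iEH(s)t})_{a_1 b_1}\cdots (e^{-iEH(s)t})_{a_m b_m}\, \mathfrak{c}_{b_1}\cdots \mathfrak{c}_{b_m}.
\end{equation*}
Recognizing the product of matrix elements as the $\bigl((a_1\ldots a_m),(b_1\ldots b_m)\bigr)$ entry of the Kronecker power $e^{-iEH(s)t}\otimes\cdots\otimes e^{-iEH(s)t}$, and $\mathfrak{c}_{b_1}\cdots\mathfrak{c}_{b_m}$ as the $(b_1,\ldots,b_m)$ component of $\mathfrak{c}\otimes\cdots\otimes\mathfrak{c}$, this is precisely the componentwise form of the claimed identity.

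The antisymmetry check and the single-factor law are immediate. The only point demanding care is the bookkeeping in the last step: one must keep straight that the relevant tensor product is taken over the $2n$-dimensional index space while the entries multiply as operators in $\mathcal{B}(\mathcal{H})$, so that the homomorphism property of $U(\;\cdot\;)U^{\dagger}$ converts the product of $m$ single-factor transformations exactly into the action of the $m$-fold Kronecker power. I expect this indexing step to be the main (though still elementary) obstacle.
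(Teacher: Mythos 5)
Your proof is correct and follows exactly the route the paper intends: the paper gives no written proof, merely asserting that Lemma \ref{lem:Hs} and Lemma \ref{lem:orthTransform} ``lead to'' the result, and your argument fills in precisely those details (antisymmetry of $H(s)$, application of Lemma \ref{lem:orthTransform} to $tH(s)$, and the lift to tensor components via multiplicativity of unitary conjugation). No gaps.
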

Here similar to Ref.~\cite{Nosal2020} $ e^{i \hat{H}(s) t} \mathfrak{c} \otimes \ldots \otimes \mathfrak{c}   e^{-i \hat{H}(s) t} $ means the tensor with elements $ e^{i \hat{H}(s) t} \mathfrak{c}_{i_1} \ldots \mathfrak{c}_{i_m}   e^{-i \hat{H}(s) t} $. 

Now let us prove the following proposition which allows one to reduce the computation of effective dynamics for moments to projections in the spaces whose dimension grows linear in $ n $ for fixed order moments.
\begin{proposition}\label{prop:effHeisDyn}
	\begin{equation*}
		\mathfrak{P}(e^{i \hat{H} t} \; \cdot \; e^{-i \hat{H} t})( \mathfrak{c} \otimes \ldots \otimes \mathfrak{c}  ) = (P^{(m)} (e^{-i H E  t} \otimes  \ldots \otimes  e^{-i H E t} ) ) \mathfrak{c} \otimes \ldots \otimes \mathfrak{c},
	\end{equation*}
	where
	\begin{equation}\label{eq:projOfmthOrder}
		P^{(m)} (X)= \lim\limits_{T \rightarrow \infty} \frac{1}{T} \int_0^T ds  (e^{-i H_0 E  s} \otimes  \ldots \otimes  e^{-i H_0 E  s} )  X  (e^{i H_0 E  s} \otimes  \ldots \otimes  e^{i H_0 E  s} ) 
	\end{equation}
	and $ m $ is the number of tensor factors in $ \mathfrak{c} \otimes \ldots \otimes \mathfrak{c}   $.
\end{proposition}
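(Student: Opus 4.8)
The plan is to peel the computation down to the single-particle ($2n$-dimensional) level and then tensor everything back up, so that the averaging over $s$ assembles exactly into $P^{(m)}$. First I would apply Corollary~\ref{cor:averSevOp} componentwise to the tensor $\mathfrak{c}\otimes\ldots\otimes\mathfrak{c}$ (whose entries are the products $\mathfrak{c}_{i_1}\ldots\mathfrak{c}_{i_m}$), rewriting the left-hand side as
\begin{equation*}
\lim_{T\to\infty}\frac1T\int_0^T ds\,\bigl(e^{i\hat H(s)t}\,\mathfrak{c}\otimes\ldots\otimes\mathfrak{c}\,e^{-i\hat H(s)t}\bigr).
\end{equation*}
Lemma~\ref{lem:calcUnitTrans} then converts each integrand into multiplication of $\mathfrak{c}\otimes\ldots\otimes\mathfrak{c}$ by a tensor power of one single-particle matrix, reducing the whole statement to an identity about that matrix as a function of the averaging parameter $s$.

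The heart of the argument is to show that this $s$-dependent single-particle generator is a conjugate, by the free propagators, of its value at $s=0$. Lemma~\ref{lem:Hs} gives $H(s)=e^{-iH_0Es}He^{iEH_0s}$, and the elementary relation $E^2=I$ yields the commutation rule $Ee^{\pm iH_0Es}=e^{\pm iEH_0s}E$. Combining these I would obtain the identity
\begin{equation*}
H(s)E=e^{-iH_0Es}\,(HE)\,e^{iH_0Es},
\end{equation*}
so that exponentiating gives $e^{-iH(s)Et}=e^{-iH_0Es}\,e^{-iHEt}\,e^{iH_0Es}$, which is precisely the structure needed for $P^{(m)}$.

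Next I would raise this single-particle identity to its $m$-fold tensor power and use the interchange law $(AB)\otimes(CD)=(A\otimes C)(B\otimes D)$ repeatedly to pull the factors $e^{\mp iH_0Es}$ out to the left and right of the tensor power of $e^{-iHEt}$. Inserting the result back into the $s$-average and comparing with the definition \eqref{eq:projOfmthOrder} identifies the average with $P^{(m)}$ applied to $e^{-iHEt}\otimes\ldots\otimes e^{-iHEt}$, which is the claim.

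The main obstacle I anticipate is purely a matter of bookkeeping the side on which $E$ sits: because $H=-H^T$, $H_0=-H_0^T$ and $E=E^T$, the products $EH$ and $HE$ agree only up to transposition, so one must combine Lemmas~\ref{lem:Hs} and~\ref{lem:calcUnitTrans} through the commutation rule $Ee^{iH_0Es}=e^{iEH_0s}E$ with care in order to land on the exact matrix $e^{-iHEt}$ and the exact propagators $e^{\pm iH_0Es}$ appearing in the statement. Ensuring that the tensor-product/matrix-product interchange is applied to all $m$ factors simultaneously, and that the limit in $T$ passes through the (finite-dimensional, hence continuous) linear operations, are the remaining routine checks.
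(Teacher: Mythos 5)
Your route --- Corollary~\ref{cor:averSevOp} componentwise, Lemma~\ref{lem:calcUnitTrans} to pass to single-particle matrices, a conjugation identity for the $s$-dependent generator, and the interchange law for tensor powers --- is exactly the structure of the paper's own proof, and the identity you derive, $e^{-iH(s)Et}=e^{-iH_0Es}\,e^{-iHEt}\,e^{iH_0Es}$, is correct. The gap is the step you postpone as ``bookkeeping'': it cannot be carried out. Lemma~\ref{lem:calcUnitTrans} delivers tensor powers of $e^{-iEH(s)t}$, with $E$ on the \emph{left}, whereas your identity concerns $e^{-iH(s)Et}$, with $E$ on the \emph{right}, and these are genuinely different matrices: since $E^2=I$ one has $e^{-iEH(s)t}=E\,e^{-iH(s)Et}\,E$, so executing your plan consistently yields
\begin{equation*}
\mathfrak{P}(e^{i\hat{H}t}\;\cdot\;e^{-i\hat{H}t})(\mathfrak{c}\otimes\ldots\otimes\mathfrak{c})
= E^{\otimes m}\,P^{(m)}\bigl(e^{-iHEt}\otimes\ldots\otimes e^{-iHEt}\bigr)\,E^{\otimes m}\,\mathfrak{c}\otimes\ldots\otimes\mathfrak{c},
\end{equation*}
with an $E^{\otimes m}$-conjugation that does not go away. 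The identity compatible with Lemma~\ref{lem:calcUnitTrans} is instead $EH(s)=e^{-iEH_0s}\,(EH)\,e^{iEH_0s}$, and it proves the proposition with $EH$ and $EH_0$ ($E$ on the left) everywhere in place of $HE$ and $H_0E$ in the statement and in \eqref{eq:projOfmthOrder}.

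That no amount of care closes this gap can be seen in a one-mode example: $n=m=1$, $H=H_0=\bigl(\begin{smallmatrix}0&h\\-h&0\end{smallmatrix}\bigr)$ with $h$ real. Then $\hat{H}(s)=\hat{H}$, the $s$-average is trivial, and the left-hand side is $e^{-iEHt}\mathfrak{c}=(e^{iht}\hat{c}_1,\;e^{-iht}\hat{c}_1^{\dagger})^T$, while the right-hand side as printed is $P^{(1)}(e^{-iHEt})\,\mathfrak{c}=e^{-iHEt}\mathfrak{c}=(e^{-iht}\hat{c}_1,\;e^{iht}\hat{c}_1^{\dagger})^T$; these disagree for generic $t$. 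In other words, the printed proposition is inconsistent with Lemma~\ref{lem:calcUnitTrans} as printed: one of the two must carry $E$ on the other side (a direct CAR computation confirms the lemma, so the typo is in the proposition). You should be aware that the paper's own proof has exactly the same defect --- it equates $(e^{-iEH(s)t})^{\otimes m}$ with $(e^{-iH_0Es})^{\otimes m}(e^{-iHEt})^{\otimes m}(e^{iH_0Es})^{\otimes m}$, silently flipping the side of $E$ --- so your plan, carried through with the $E$-on-the-left ordering, gives a correct proof of the corrected statement; but your stated intention to land on exactly the matrix $e^{-iHEt}$ and the propagators $e^{\pm iH_0Es}$ is a step that would fail.
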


\begin{proof} 
	Since
	\begin{equation*}
		e^{-i E H(s)} =  e^{-i E  e^{-i H_0 E  s} H e^{i E H_0 s}} = e^{-i H_0 E  s} e^{-i E   H t } e^{i E H_0 s},
	\end{equation*}
	then we have
	\begin{eqnarray*}
		&e^{-i E H(s) t} \otimes   \ldots \otimes  e^{-i E H(s) t}  \\
		&=  (e^{-i H_0 E  s} \otimes  \ldots \otimes  e^{-i H_0 E  s} )   (e^{-i H E  t} \otimes  \ldots \otimes  e^{-i H E t} ) (e^{i H_0 E  s} \otimes  \ldots \otimes  e^{i H_0 E  s} ) 
	\end{eqnarray*}
	then by corollary \ref{cor:averSevOp} and lemma \ref{lem:calcUnitTrans} we have
	\begin{eqnarray*}
		&\mathfrak{P}(e^{i \hat{H} t} \; \cdot \; e^{-i \hat{H} t})( \mathfrak{c} \otimes \ldots \otimes \mathfrak{c}  ) \\
		&= \lim\limits_{T \rightarrow \infty} \frac{1}{T} \int_0^T ds ( e^{-i E H(s) t} \otimes   \ldots \otimes  e^{-i E H(s) t} ) \mathfrak{c} \otimes \ldots \otimes \mathfrak{c} \\
		&=  \lim\limits_{T \rightarrow \infty} \frac{1}{T} \int_0^T ds (e^{-i H_0 E  s} \otimes  \ldots \otimes  e^{-i H_0 E  s} ) \cdot \\
		&\cdot  (e^{-i H E  t} \otimes  \ldots \otimes  e^{-i H E t} ) (e^{i H_0 E  s} \otimes  \ldots \otimes  e^{i H_0 E  s} )  \mathfrak{c} \otimes \ldots \otimes \mathfrak{c}\\
		&=  (P^{(m)} (e^{-i H E  t} \otimes  \ldots \otimes  e^{-i H E t} ) ) \mathfrak{c} \otimes \ldots \otimes \mathfrak{c}
	\end{eqnarray*}
\end{proof}

If one additionally averages this effective Heisenberg dynamics with respect to the initial density matrix, then one obtains effective dynamics of $ m $-th order moments.

If one defines $ h^{(m)}_0 = -i \sum  I_{2n} \otimes \ldots I_{2n} \otimes H_0 E  \otimes I_{2n} \otimes \ldots I_{2n}$, then \eqref{eq:projOfmthOrder} takes the form
\begin{equation*}
	P^{(m)} (X)= \lim\limits_{T \rightarrow \infty} \frac{1}{T} \int_0^T ds  e^{h^{(m)}_0 s}  X e^{-h^{(m)}_0 s}
\end{equation*}
Similarly, if one defines  $ h^{(m)} = -i \sum  I_{2n} \otimes \ldots I_{2n} \otimes H E  \otimes I_{2n} \otimes \ldots I_{2n}$ 

\begin{equation*}
	\mathfrak{P}(e^{i \hat{H} t} \; \cdot \; e^{-i \hat{H} t})( \mathfrak{c} \otimes \ldots \otimes \mathfrak{c}  ) = (P^{(m)} (e^{h^{(m)} t} ) \mathfrak{c} \otimes \ldots \otimes \mathfrak{c}.
\end{equation*}

Thus, the computation of the effective Heisenberg dynamics of arbitrary order  moments for creation and annihilation operators  is reduced to similar  computations of
\begin{equation*}
	P (e^{ht} ), \qquad 	P (\; \cdot \; ) \equiv  \lim\limits_{T \rightarrow \infty} \frac{1}{T} \int_0^T ds  e^{h_0 s}  \; \cdot \; e^{-h_0 s},
\end{equation*}
for correspondent  $ h^{(m)} $ and $ h_0^{(m)} $ as $ h $ and $ h_0 $.

The map $ P $  inherits the properties of the map $ \mathfrak{P} $ from Prop. \ref{prop:propOfP}.
\begin{proposition}
	\begin{enumerate}
		\item $ P $ is idempotent.
		\begin{equation*}
			P^2 = P
		\end{equation*}
		\item The result of application of $ P $  to an arbitrary matrix $ X $ commutes with $ e^{-h_0 t} $
		\begin{equation}\label{eq:commWithP}
			e^{-h_0 t}	P ( X ) =  P ( X ) e^{-h_0 t}
		\end{equation}
		\item Pulling $ e^{-h_0 t} $ through the map $ P $ 
		\begin{equation*}
			P (e^{-h_0 t} X ) = e^{-h_0 t} P ( X )
		\end{equation*}
		for an arbitrary matrix $ X $. 
	\end{enumerate}
\end{proposition}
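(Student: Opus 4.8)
The plan is to mirror the three arguments already carried out for $\mathfrak{P}$ in the proof of Proposition~\ref{prop:propOfP}, exploiting that $P$ has exactly the same conjugation-averaging structure, only now acting by ordinary matrix multiplication rather than through the dot-notation. As a preliminary I would record that $h_0$ is anti-Hermitian: from $H_0=-H_0^T=-\tilde H_0$ one checks that the building block satisfies $(-iH_0E)^{\dagger}=-(-iH_0E)$, and since $h_0$ is a sum of commuting tensor factors each built from $-iH_0E$, we get $h_0^{\dagger}=-h_0$. Hence $e^{h_0 s}$ is unitary, the integrand defining $P$ is bounded uniformly in $s$, and the Cesàro limit is well defined — this boundedness is the one analytic fact the argument will lean on.

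For item~3 I would simply note that $e^{-h_0 t}$ commutes with $e^{h_0 s}$, both being functions of $h_0$, so it pulls straight out of the average:
\begin{equation*}
	P(e^{-h_0 t}X)=\lim_{T\to\infty}\frac1T\int_0^T ds\, e^{h_0 s}e^{-h_0 t}X e^{-h_0 s}=e^{-h_0 t}\lim_{T\to\infty}\frac1T\int_0^T ds\, e^{h_0 s}X e^{-h_0 s}=e^{-h_0 t}P(X).
\end{equation*}
For item~2 the key step is a shift of the integration variable. Writing
\begin{equation*}
	e^{-h_0 t}P(X)e^{h_0 t}=\lim_{T\to\infty}\frac1T\int_0^T ds\, e^{h_0(s-t)}X e^{-h_0(s-t)}=\lim_{T\to\infty}\frac1T\int_{-t}^{T-t}du\, e^{h_0 u}X e^{-h_0 u},
\end{equation*}
I would argue that moving both endpoints by the fixed amount $t$ alters the average only by a term of order $t/T$, which vanishes as $T\to\infty$; this is precisely where boundedness of the unitary integrand is used. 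Thus $e^{-h_0 t}P(X)e^{h_0 t}=P(X)$, and multiplying on the right by $e^{-h_0 t}$ gives \eqref{eq:commWithP}. Item~1 then follows from item~2: since $e^{h_0 s}$ commutes with $P(X)$, the inner conjugation in $P^2(X)=\lim_{T\to\infty}\frac1T\int_0^T ds\, e^{h_0 s}P(X)e^{-h_0 s}$ is trivial, leaving $P^2(X)=P(X)$.

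I expect the only genuine subtlety to be the endpoint-shift estimate in item~2; the rest is an immediate consequence of the commutativity of functions of $h_0$. As an alternative that bypasses the estimate entirely, one may diagonalize the anti-Hermitian $h_0=\sum_{\lambda} i\lambda\, Q_\lambda$ and evaluate the average with $\lim_{T\to\infty}\frac1T\int_0^T e^{i(\lambda-\lambda')s}ds=\delta_{\lambda\lambda'}$, obtaining the pinching representation $P(X)=\sum_\lambda Q_\lambda X Q_\lambda$ in direct analogy with \eqref{eq:averProjRep}. All three items then read off at once from $Q_\lambda Q_{\lambda'}=\delta_{\lambda\lambda'}Q_\lambda$ and $e^{-h_0 t}=\sum_\lambda e^{-i\lambda t}Q_\lambda$, exactly as in the proof of Proposition~\ref{prop:propOfP}.
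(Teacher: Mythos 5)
Your proof is correct, but it takes a route the paper does not: the paper never writes out a proof of this proposition at all, stating instead that $P$ \emph{inherits} the properties of $\mathfrak{P}$ from Proposition~\ref{prop:propOfP}, i.e.\ the intended argument is to rerun that proof with the spectral projectors of $h_0$ in place of the $\Pi_{\varepsilon}$ --- which is exactly the pinching representation $P(X)=\sum_{\lambda}Q_{\lambda}XQ_{\lambda}$ that you relegate to an ``alternative'' in your last sentences. Your primary route differs on items 1 and 2: you get item 2 from a shift of the integration variable plus an $O(t/T)$ endpoint estimate, and item 1 as a corollary of item 2, since commutation makes the integrand in $P^{2}(X)$ identically equal to $P(X)$; this is more economical than the double-sum orthogonality computation mirrored from Proposition~\ref{prop:propOfP}, and it works for any uniformly bounded group $e^{h_0 s}$. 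Your preliminary check that $h_0$ is anti-Hermitian (equivalently that $H_0E$ is Hermitian, which indeed follows from $H_0=-H_0^T=-\tilde{H}_0$) is a genuine addition rather than a formality: it is precisely the ingredient whose failure in the bosonic case of Section~\ref{sec:bosonicCase} (there $H_0J$ may have non-real eigenvalues) makes the analogous average diverge, and the paper never isolates it for $h_0$. One small imprecision: boundedness of the unitary integrand justifies your $O(t/T)$ estimate but does not by itself give \emph{existence} of the Ces\`aro limit defining $P$; existence comes from the finite spectral decomposition $h_0=\sum_{\lambda}i\lambda Q_{\lambda}$ with real $\lambda$, i.e.\ from your spectral alternative (or is simply presupposed by the paper's definition of $P$), so the phrase attributing well-definedness to boundedness alone should be adjusted.
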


\section{Time-local effective fermionic Heisenberg equations}
\label{sec:timeLocalFermHeis}

Now let us consider $ H = H_0 + \lambda H_I $, where $ \lambda \rightarrow 0 $ is a small. Then it is possible to represent $ h^{(m)} $ as
\begin{equation*}
	h^{(m)} = h_0^{(m)} + \lambda h_I^{(m)},
\end{equation*}
where $ h^{(m)}_I = -i \sum  I_{2n} \otimes \ldots I_{2n} \otimes H_I E  \otimes I_{2n} \otimes \ldots I_{2n}$.

So  as at the end of the previous section let us firstly discuss the general case omitting the index referring to the specific $ m $. So let we have
$ h = h_0 + \lambda h_I $.

As we are interested in an asymptotic expansion in $ \lambda $, so let us turn to ''interaction'' representation.  
\begin{proposition}
	Let $ v(t) \equiv e^{-h_0t} e^{ht} $, then
	\begin{equation*}
		\frac{d}{dt} v(t) = \lambda h_I(t)  v(t),
	\end{equation*}
	where $ h_I(t) \equiv e^{-h_0t} h_I e^{h_0t}$.
\end{proposition}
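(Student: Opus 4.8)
The plan is to verify the claimed first-order linear equation by a direct application of the product rule together with the elementary commutativity of $h_0$ with its own exponential. The statement is essentially the standard passage to the interaction picture, so the computation is short; the only care needed is bookkeeping of which factors commute.

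First I would differentiate $v(t) = e^{-h_0 t} e^{ht}$ as a product of two matrix exponentials, obtaining
\begin{equation*}
	\frac{d}{dt} v(t) = -h_0 e^{-h_0 t} e^{ht} + e^{-h_0 t} h\, e^{ht}.
\end{equation*}
Next I would substitute the decomposition $h = h_0 + \lambda h_I$ into the second term, splitting it into a piece containing $h_0$ and a piece containing $\lambda h_I$. Since $h_0$ commutes with $e^{-h_0 t}$, the term $e^{-h_0 t} h_0 e^{ht}$ equals $h_0 e^{-h_0 t} e^{ht}$ and cancels the first term $-h_0 e^{-h_0 t} e^{ht}$ exactly. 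This leaves
\begin{equation*}
	\frac{d}{dt} v(t) = \lambda\, e^{-h_0 t} h_I\, e^{ht}.
\end{equation*}

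Finally I would reconstruct the desired right-hand side by inserting the identity $e^{h_0 t} e^{-h_0 t} = I$ between $h_I$ and $e^{ht}$, so that $e^{-h_0 t} h_I e^{ht} = (e^{-h_0 t} h_I e^{h_0 t})(e^{-h_0 t} e^{ht}) = h_I(t)\, v(t)$, using the definitions $h_I(t) \equiv e^{-h_0 t} h_I e^{h_0 t}$ and $v(t) = e^{-h_0 t} e^{ht}$. This yields $\frac{d}{dt} v(t) = \lambda h_I(t) v(t)$.

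There is no genuine obstacle here: the entire argument rests on the product rule for differentiating exponentials and on the single fact that $h_0$ commutes with $e^{\pm h_0 t}$. The one point requiring minor attention is that $h_0$ and $h$ need not commute, so one must keep the two exponential factors in their given order throughout and only exploit commutativity for $h_0$ against its own flow; no analogous simplification is available for the $\lambda h_I$ term, which is precisely why it survives as the generator of $v(t)$.
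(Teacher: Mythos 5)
Your proof is correct and follows essentially the same route as the paper's: differentiate the product, substitute $h = h_0 + \lambda h_I$, cancel the $h_0$ terms using commutativity of $h_0$ with its own exponential, and insert $e^{h_0 t}e^{-h_0 t} = I$ to recover $h_I(t)v(t)$. Your write-up is in fact a cleaner, more explicit version of the paper's compressed calculation.
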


\begin{proof}
	By direct calculation we have
	\begin{eqnarray*}
		\frac{d}{dt} v(t) = \frac{d}{dt} (e^{-h_0t} e^{ht}) 
		= e^{-h_0t} (- h_0+h_0 +\lambda h_I e^{ht}) \\
		= \lambda e^{-h_0t}  h_I e^{h_0t} (e^{-h_0t}e^{ht})  = \lambda h_I(t)  v(t).
	\end{eqnarray*}	
\end{proof}

Then the asymptotic expansion of $ v(t) $ could be calculated by the Dyson series.
\begin{corollary}
	The asymptotic expansion of $ P v(t)  $ has the form
	\begin{equation}\label{eq:PvTexpSeries}
		P v(t) = I + \sum_{k=1}^{\infty} \lambda^k \mu_k(t),
	\end{equation}
	where
	\begin{equation}\label{eq:mukDef}
		\mu_k(t) = \int_0^t dt_k \ldots \int_0^{t_2} dt_1 P(h_I(t_k) \ldots h_I(t_1)).
	\end{equation}
\end{corollary}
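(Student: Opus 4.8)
The plan is to combine the preceding proposition, which gives $\frac{d}{dt} v(t) = \lambda h_I(t) v(t)$, with the standard Dyson series construction, and then to apply the linear map $P$ termwise. First I would integrate the differential equation with initial condition $v(0) = e^{-h_0 \cdot 0} e^{h_0 \cdot 0} = I$ to obtain the integral equation
\begin{equation*}
	v(t) = I + \lambda \int_0^t dt' \, h_I(t') v(t').
\end{equation*}
Iterating this integral equation (Picard iteration / Neumann series) produces the usual time-ordered Dyson expansion
\begin{equation*}
	v(t) = I + \sum_{k=1}^{\infty} \lambda^k \int_0^t dt_k \int_0^{t_k} dt_{k-1} \ldots \int_0^{t_2} dt_1 \, h_I(t_k) h_I(t_{k-1}) \ldots h_I(t_1),
\end{equation*}
where the nested limits of integration enforce the ordering $0 \le t_1 \le t_2 \le \ldots \le t_k \le t$.

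Next I would apply $P$ to both sides. Since $P$ is a fixed linear superoperator (it does not depend on $t$) and the integrals are over a compact domain with a convergent series — we are working in finite dimension, so $h_I(t')$ is bounded uniformly on $[0,t]$ and the Neumann series converges absolutely — I may interchange $P$ with the summation and with the iterated integrals. Applying $P$ to the constant term $I$ gives $P(I) = I$, because $I$ commutes with $e^{h_0 s}$ so the integrand in the definition of $P$ is identically $I$ and the Cesàro average is $I$. Applying $P$ to the $k$-th term and moving $P$ inside the integral yields exactly
\begin{equation*}
	\mu_k(t) = \int_0^t dt_k \ldots \int_0^{t_2} dt_1 \, P\bigl(h_I(t_k) \ldots h_I(t_1)\bigr),
\end{equation*}
which is precisely \eqref{eq:mukDef}, and reassembling the series gives \eqref{eq:PvTexpSeries}.

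The only genuinely delicate point is justifying the interchange of $P$ with the $T \to \infty$ Cesàro limit hidden inside $P$ and with the iterated integrals and infinite sum. I expect this to be the main obstacle to stating rigorously, though it is routine in substance: in finite dimension all operators are bounded, the Dyson series converges absolutely and uniformly in $t$ on any compact interval, and the $s$-average defining $P$ is a bounded linear operation (it is an idempotent projection, hence norm-bounded), so dominated convergence and Fubini apply and all interchanges are legitimate. Since the excerpt works throughout at a finite-dimensional level and has already established that $P$ is a well-defined idempotent superoperator, I would simply invoke absolute convergence of the Dyson series together with boundedness and linearity of $P$ to justify termwise application, and thereby obtain the claimed expansion.
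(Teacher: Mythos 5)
Your proposal is correct and follows essentially the same route as the paper, which states the corollary without proof, simply invoking the Dyson series for the interaction-picture equation $\frac{d}{dt}v(t) = \lambda h_I(t)v(t)$ and applying $P$ termwise by linearity. Your additional care about $P(I)=I$ and the interchange of $P$ (a bounded linear idempotent in finite dimension) with the integrals and the series is a legitimate filling-in of details the paper leaves implicit.
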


Let us note the following ''stationarity'' property.
\begin{proposition}
	\begin{equation*}
		P(h_I(t_k) \ldots h_I(t_1)) = P(h_I(t_k- t) \ldots h_I(t_1 - t)) 
	\end{equation*} 
\end{proposition}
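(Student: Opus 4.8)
The plan is to reduce the claimed identity to a single invariance property of $P$, namely that $P$ is unchanged under conjugation of its argument by the free evolution $e^{h_0 t}$. First I would rewrite each shifted factor. Since $h_I(\tau) = e^{-h_0 \tau} h_I e^{h_0 \tau}$ and all exponentials of $h_0$ commute with one another, a direct computation gives
\begin{equation*}
	h_I(t_j - t) = e^{-h_0(t_j-t)} h_I e^{h_0(t_j-t)} = e^{h_0 t} \big( e^{-h_0 t_j} h_I e^{h_0 t_j} \big) e^{-h_0 t} = e^{h_0 t} h_I(t_j) e^{-h_0 t}
\end{equation*}
for every $j$. Thus each factor of the shifted product is the corresponding unshifted factor conjugated by $e^{h_0 t}$.

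Next I would form the product and let the inner exponentials telescope. Writing the factors in order and using $e^{-h_0 t} e^{h_0 t} = I$ between consecutive factors, all interior conjugations cancel and only the outermost ones survive:
\begin{equation*}
	h_I(t_k - t) \cdots h_I(t_1 - t) = e^{h_0 t} \big( h_I(t_k) \cdots h_I(t_1) \big) e^{-h_0 t}.
\end{equation*}
Applying $P$ to both sides then reduces the proposition to showing $P(e^{h_0 t} X e^{-h_0 t}) = P(X)$ for $X = h_I(t_k) \cdots h_I(t_1)$.

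The remaining, and genuinely load-bearing, step is this conjugation invariance of $P$, which I would establish from the properties already proved. Replacing $t$ by $-t$ in the ``pulling through'' property gives $P(e^{h_0 t} Y) = e^{h_0 t} P(Y)$, while the right-hand analogue $P(Y e^{-h_0 t}) = P(Y) e^{-h_0 t}$ follows at once from the defining integral because $e^{-h_0 t}$ commutes with every $e^{\pm h_0 s}$. Combining these two with $Y = X e^{-h_0 t}$ yields $P(e^{h_0 t} X e^{-h_0 t}) = e^{h_0 t} P(X) e^{-h_0 t}$, and the commutativity of $P(X)$ with $e^{-h_0 t}$ then collapses the right-hand side to $P(X)$. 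Alternatively one may argue directly: substituting the definition of $P$ and changing variables $s \mapsto s+t$, the shift affects only the endpoints of the interval $[0,T]$, whose contribution is $O(1/T)$ and vanishes in the Cesàro limit, giving the same invariance.

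The only point requiring care is this last invariance: it is where the averaging structure of $P$ is actually used, and one must check either that the boundary terms of the shifted average are negligible or, equivalently, that the stated commutativity and pull-through properties suffice. Everything else is bookkeeping with commuting exponentials of $h_0$.
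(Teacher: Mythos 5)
Your proof is correct and is essentially the paper's own argument run in reverse: the paper starts from $P(h_I(t_k)\cdots h_I(t_1))$, rewrites it as $e^{-h_0 t}P(h_I(t_k)\cdots h_I(t_1))e^{h_0 t}$ via Eq.~\eqref{eq:commWithP}, and then telescopes identities $e^{-h_0 t}e^{h_0 t}$ between the factors inside the averaging integral to produce the shifted arguments, whereas you telescope first, writing $h_I(t_k-t)\cdots h_I(t_1-t)=e^{h_0 t}\,h_I(t_k)\cdots h_I(t_1)\,e^{-h_0 t}$, and then discharge the conjugation using the pull-through property, its right-sided analogue, and Eq.~\eqref{eq:commWithP}. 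Both proofs rest on exactly the same two facts --- $h_I(\tau-t)=e^{h_0 t}h_I(\tau)e^{-h_0 t}$ and the invariance of $P$ under conjugation by the free evolution --- so there is no substantive difference; your flagged concern about boundary terms only matters for your alternative change-of-variables argument, which the paper avoids (and which is indeed the point that fails in the bosonic case).
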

\begin{proof}
	Taking into account Eq. \eqref{eq:commWithP} we obtain
	\begin{eqnarray*}
		&P(h_I(t_k) \ldots h_I(t_1)) = e^{- h_0 t} P(h_I(t_k) \ldots h_I(t_1)) e^{h_0 t} \\
		&= \lim\limits_{T \rightarrow \infty} \frac{1}{T} \int_0^T ds    e^{h_0 s}  e^{- h_0 t}  h_I(t_k) \ldots h_I(t_1) e^{-h_0 s} \\
		&= \lim\limits_{T \rightarrow \infty} \frac{1}{T} \int_0^T ds  e^{h_0 s}   e^{-h_0 t_k} h_I e^{h_0 t_k} e^{- h_0 t} e^{ h_0 t}   \ldots e^{- h_0 t} e^{ h_0 t}  e^{-h_0 t_1} h_I e^{h_0 t_1} e^{h_0 t}  e^{-h_0 s} 
		\\
		&= \lim\limits_{T \rightarrow \infty} \frac{1}{T} \int_0^T ds  e^{h_0 s}  h_I(t_k - t) \ldots h_I(t_1 -t)e^{-h_0 s} = P(h_I(t_k- t) \ldots h_I(t_1 - t)) 
	\end{eqnarray*}
\end{proof}

In particular, it leads to
\begin{equation*}
	P(h_I(t_1)) = P(h_I), \qquad  P(h_I(t_2) h_I(t_1))  = P(h_I h_I(t_1-t_2)) 
\end{equation*}

This allows one to calculate first terms of expansion in Eq.~\eqref{eq:PvTexpSeries}.
\begin{equation*}
	\mu_1(t) =  \int_0^{t} dt_1 P(h_I(t_1)) =  \int_0^{t} dt_1 P(h_I)= t P(h_I)
\end{equation*}
\begin{eqnarray*}
	\mu_2(t) = \int_0^t dt_2 \int_0^{t_2} dt_1 P(h_I(t_2) h_I(t_1)) = \int_0^t dt_2 \int_0^{t_2} dt_1 P(h_I h_I(t_1-t_2))  \\
	= \int_0^t dt_2 \int_0^{t_2} dt_1 P(h_I h_I(-t_1)) = \int_0^t dt_2 \int_0^{t_2} dt_1 P(h_I e^{t_1 [h_0, \; \cdot \;]} h_I) \\
	=  P\left(h_I \frac{e^{t [h_0, \; \cdot \;]} -1 -  [h_0, \; \cdot \;] t }{[h_0, \; \cdot \;]^2} h_I\right)
\end{eqnarray*}

Here the function of $ [h_0, \; \cdot \;] $ is defined by the series 
\begin{equation*}
	\frac{e^{t [h_0, \; \cdot \;]} -1 -  [h_0, \; \cdot \;] t }{[h_0, \; \cdot \;]^2} \equiv \sum_{k=2}^{\infty} ([h_0, \; \cdot \;])^{k-2} \frac{t^k}{k!},
\end{equation*}
so it is well-defined  in spite of degeneracy of $  [h_0, \; \cdot \;] $.

Now let us perturbatively obtain a time-local equation for $ P v(t)  $.
\begin{proposition}
	The generator $  l_I(t) $ of the time-local equation for $ P v(t) $
	\begin{equation*}
		\frac{d}{dt} P v(t) = l_I(t) P v(t) 
	\end{equation*}
	has asymptotic expansion
	\begin{equation*}
		l_I(t) = \sum_k \lambda^k \kappa_k(t),
	\end{equation*}
	where
	\begin{equation}\label{eq:kappaDef}
		\kappa_k(t) =  \sum_{\sum k_j = k} (-1)^q \dot{\mu}_{k_0}(t) \mu_{k_1}(t) \ldots \mu_{k_q}(t)
	\end{equation}
\end{proposition}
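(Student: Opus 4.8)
The plan is to read the claimed formula for $\kappa_k(t)$ as the coefficient-wise expression of the logarithmic derivative $l_I(t) = \bigl(\tfrac{d}{dt}Pv(t)\bigr)\,(Pv(t))^{-1}$, so that the sought time-local equation holds essentially by construction, and then to recover \eqref{eq:kappaDef} by a formal power-series inversion.

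First I would note that, by the expansion \eqref{eq:PvTexpSeries},
\begin{equation*}
Pv(t) = I + \sum_{k\geq 1}\lambda^k\mu_k(t)
\end{equation*}
is a formal power series in $\lambda$ whose zeroth-order coefficient is the identity. Hence it is invertible in the ring of formal power series with matrix coefficients, with inverse given by the Neumann series
\begin{equation*}
(Pv(t))^{-1} = \sum_{q=0}^{\infty}(-1)^q\Bigl(\sum_{j\geq 1}\lambda^j\mu_j(t)\Bigr)^{q}.
\end{equation*}
Defining $l_I(t) \equiv \bigl(\tfrac{d}{dt}Pv(t)\bigr)(Pv(t))^{-1}$ then makes the time-local equation $\tfrac{d}{dt}Pv(t) = l_I(t)\,Pv(t)$ an identity.

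Second, I would extract the asymptotic expansion of $l_I(t)$ by multiplying the two series. Using $\tfrac{d}{dt}Pv(t) = \sum_{k_0\geq 1}\lambda^{k_0}\dot\mu_{k_0}(t)$ together with the Neumann series and expanding the $q$-th power as $\bigl(\sum_{j\geq 1}\lambda^j\mu_j\bigr)^q = \sum_{k_1,\dots,k_q\geq 1}\lambda^{k_1+\dots+k_q}\mu_{k_1}\cdots\mu_{k_q}$, I collect the coefficient of $\lambda^k$: it is the sum over all $q\geq 0$ and all $k_0,k_1,\dots,k_q\geq 1$ with $k_0+\dots+k_q = k$ of $(-1)^q\dot\mu_{k_0}\mu_{k_1}\cdots\mu_{k_q}$, which is precisely \eqref{eq:kappaDef}.

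The point that needs care is not really an obstacle but the interpretation: the invertibility of $Pv(t)$, and with it the whole construction of $l_I(t)$, is meant in the sense of formal expansions in $\lambda$ (each coefficient being a finite sum), whereas the honest operator $Pv(t)$ lies in the range of the projector $P$ and may well be singular for fixed $\lambda$. Since the claim concerns only the asymptotic series $l_I(t) = \sum_k\lambda^k\kappa_k(t)$, this is exactly the right framework. Should one prefer to avoid the explicit inverse, the same result follows by matching $\tfrac{d}{dt}Pv(t) = l_I(t)Pv(t)$ order by order, which yields the recursion $\kappa_k = \dot\mu_k - \sum_{j=1}^{k-1}\kappa_j\mu_{k-j}$ (with base case $\kappa_1 = \dot\mu_1$, consistent with $\mu_1(t) = tP(h_I)$); an easy induction then verifies that the closed form \eqref{eq:kappaDef} solves this recursion.
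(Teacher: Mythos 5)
Your proof is correct and takes essentially the same route as the paper: the paper also defines $l_I(t)$ as the logarithmic derivative $\left(\frac{d}{dt} P v(t)\right)(P v(t))^{-1}$ and obtains \eqref{eq:kappaDef} by expanding $\left(I + \sum_{k\geq 1} \lambda^k \mu_k(t)\right)^{-1}$ as a Neumann series and collecting the coefficient of $\lambda^k$. Your added remarks on the formal-power-series interpretation and the equivalent order-by-order recursion are just a more careful spelling-out of the same argument.
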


\begin{proof}
	By direct computation we have
	\begin{eqnarray*}
		\left(\frac{d}{dt} P v(t)\right)(P v(t) )^{-1} =  \sum_{k=1}^{\infty} \lambda^k \dot{\mu}_k(t) \left(I + \sum_{k=1}^{\infty} \lambda^k \mu_k(t)\right)^{-1} \\
		=  \sum_{k=1}^{\infty}  \lambda^k \sum_{\sum k_j = k} (-1)^q \dot{\mu}_{k_0}(t) \mu_{k_1}(t) \ldots \mu_{k_q}(t).
	\end{eqnarray*}
\end{proof}
This is nothing else but a variant of Kubo-van Kampen cumulant expansion\cite{Kubo1963, vanKampen1974, vanKampen1974a}, which is widely used to  obtain convolutionless master equations\cite{Chaturvedi79}. But usually time-integrals coming from Eq. \eqref{eq:mukDef} are rearranged in Eq. \eqref{eq:kappaDef} in such a way, that they better converge in long-time limit. However, this convergence only occurs for the systems with infinite degrees of freedom and, thus, is not relevant for purposes of the present paper.

The first two terms have the form
\begin{equation*}
	\kappa_1(t) = \dot{\mu}_1(t) =  P(h_I),
\end{equation*}
\begin{equation*}
	\kappa_2(t) = \dot{\mu}_2(t) - \dot{\mu}_1(t) \mu_1(t) =  P\left(h_I \frac{e^{t [h_0, \; \cdot \;]} -1  }{[h_0, \; \cdot \;]} h_I\right) -  t( P(h_I))^2.
\end{equation*}

If one needs to turn back from the representation picture, then one could just add the free generator $ h_I $. Namely, one has
\begin{equation*}
	\frac{d}{dt} P e^{ht} = l(t) P e^{ht}, 
\end{equation*}
where
\begin{equation*}
	l(t) \equiv h_0 +  e^{h_0 t}l_I(t) e^{-h_0 t}  = h_0 + l_I(t) .
\end{equation*}

Thus, the one has asymptotic expansion
\begin{equation}\label{eq:ltFirstTemrs}
	l(t) = h_0 + \lambda  P(h_I)  + \lambda^2 \left( P\left(h_I \frac{e^{t [h_0, \; \cdot \;]} -1  }{[h_0, \; \cdot \;]} h_I\right) -  t( P(h_I))^2\right) + O(\lambda^3).
\end{equation}

This leads to effective time-local Heisenberg equations
\begin{equation*}
	\frac{d}{dt}(\mathfrak{c}  \otimes \ldots \otimes \mathfrak{c})(t) = l^{(m)}(t) (\mathfrak{c}  \otimes \ldots \otimes \mathfrak{c})(t) ,
\end{equation*}
where one should restore index $ (m) $ for $ h_I $ and $ h_0 $ in Eq. \eqref{eq:ltFirstTemrs} and use their definitions in terms of $ H $ and $ H_0 $.

\section{Bosonic case at physical level of rigor and some caveats}
\label{sec:bosonicCase}

Let us briefly consider a formal bosonic analog of the above results. It is important due to the fact that it is still very actively studied area\cite{Agredo21, Agredo21a, Gaidash21, Medvedeva21, Volovich21}. We use the notations similar to Refs.~\cite{Ter16, Ter17a, Ter19}. We consider the Hilbert space $\otimes_{j=1}^n\ell_2$.  In such a space one could (see e.g. Ref.~\cite{scalli2003}, Paragraph 1.1.2) define $n$ pairs of creation and annihilation operators satisfying canonical commutation relations: $ [\hat{a}_i, \hat{a}_j^{\dagger}] = \delta_{ij}$, $ [\hat{a}_i, \hat{a}_j] = [\hat{a}_i^{\dagger}, \hat{a}_j^{\dagger}]= 0 $.  Let us define the $2n$-dimensional vector $\mathfrak{a} = (\hat{a}_1, \cdots, \hat{a}_n, \hat{a}_1^{\dagger}, \cdots, \hat{a}_n^{\dagger} )^T$.  Linear and quadratic forms in such operators we denote by $ f^T \mathfrak{a}  $ and $ \mathfrak{a}^T K \mathfrak{a} $, respectively.  Here, $  f \in \mathbb{C}^{2n} $ and $ K \in \mathbb{C}^{2n \times 2n} $.  Define the $2n \times 2n$-dimensional matrices as
\begin{equation*}
	J = \biggl(
	\begin{array}{cc}
		0 & -I_n \\ 
		I_n & 0
	\end{array} 
	\biggr).
\end{equation*}
Canonical commutation relations in such a notation take the form $ [f^T \mathfrak{a},\mathfrak{a}^T g]  = - f^T J g,  \forall g, f \in \mathbb{C}^{2n} $.

Similar to the fermionic case let us assume $ \hat{H}_0 = \frac12 \mathfrak{a}^T H_0 \mathfrak{a} $ and $ \hat{H} = \frac12 \mathfrak{a}^T H \mathfrak{a} $ in Cor.~\ref{cor:averSevOp} with $ H, H_0 \in \mathbb{C}^{2n \times 2n} $ such that $ H = H^T = \tilde{H} $ , $ H_0 = H_0^T = \tilde{H}_0 $ (which provides the self-adjointness $  \hat{H} =  \hat{H}^{\dagger}  $, $  \hat{H}_0 =  \hat{H}_0^{\dagger}  $). Then similar to Lemma \ref{lem:Hs}  due to Lemma 3.1 from Ref.~\cite{Ter17a}, we have
\begin{eqnarray*}
	\hat{H}(s) \equiv e^{-i \hat{H}_0 s} \hat{H}  e^{i \hat{H}_0 s}  = e^{-i \frac12 \mathfrak{a}^T H_0 \mathfrak{a}s} \frac12 \mathfrak{a}^T H \mathfrak{a}  e^{i \frac12 \mathfrak{a}^T H_0 \mathfrak{a} s} \\
	= \frac12 \mathfrak{a}^T e^{-i H_0 J s} H  e^{i J H_0 s} \mathfrak{a} = \frac12 \mathfrak{a}^T H(s) \mathfrak{a},
\end{eqnarray*}
where $ H(s) \equiv e^{-i H_0 J s} H  e^{i J H_0 s} $.

Then similar to Prop.~\ref{prop:effHeisDyn} we have
\begin{equation*}
	\mathfrak{P}(e^{i \hat{H} t} \; \cdot \; e^{-i \hat{H} t})( \mathfrak{a} \otimes \ldots \otimes \mathfrak{a}  ) = (P^{(m)} (e^{i H J t} \otimes  \ldots \otimes  e^{-i H J t} ) ) \mathfrak{a} \otimes \ldots \otimes \mathfrak{a},
\end{equation*}
where
\begin{equation}
	P^{(m)} (X)= \lim\limits_{T \rightarrow \infty} \frac{1}{T} \int_0^T ds  (e^{-i H_0 J  s} \otimes  \ldots \otimes  e^{-i H_0 J  s} )  X  (e^{i H_0 J  s} \otimes  \ldots \otimes  e^{i H_0 J  s} ). 
\end{equation}

In particular, we have  
\begin{equation*}
	P^{(1)} (X) = \lim\limits_{T \rightarrow \infty} \frac{1}{T} \int_0^T ds  e^{-i  H_0  J s} X  e^{i H_0  J s} .
\end{equation*}

Nevertheless, it is well-know (see e.g. Ref.~\cite{Arnold89}, App.~6) that $   H_0 J  $ can have non-real eigenvalues, thus the limit at the right-hand side of this formula will have exponentially diverging terms.

\section{Conclusions}
\label{sec:Concl}

We have defined the averaging projector for quantum dynamical maps, which turns them to effective ones.  We have illustrated our approach by the case of fermionic dynamics with a quadratic generator. In particular, we have perturbatively obtained the generators of  time-local effective Heisenberg equations for arbitrary numbers of multiplied creation and annhilation operators.

We have shown that the formal bosonic analog of such results leads to contradiction. So it seems that one should reformulate the initial setup for infinite-dimensional Hilbert space differently. Possibly it could be done in way similar to that for master equations (see e.g. Ref.~\cite{Holevo03}, Sec.3.3.3). Anyway it is an important direction of the further development. 

Another important question for future study is how the effective dynamics occurring in this work is related to the one described by effective Hamiltonians\cite{Soliverez81, Basharov21}. Actually,  it seems that zeroth and first order terms of our expansion will coincide with dynamics in the rotating wave approximation and the same is true for the  effective Hamiltonian case\cite{Thimmel99, Trubilko20}. Nevertheless, we doubt that it is possible to obtain our second order terms of our expansion from any effective Hamiltonian.

\end{document}